\documentclass[3p,times]{elsarticle}

\usepackage{ecrc}


\volume{00}

\firstpage{1}

\journalname{}

\runauth{Chen Wang et al.}


\jid{procs}

\jnltitlelogo{}

\CopyrightLine{2023}{Published by Elsevier Ltd.}




\usepackage{amssymb}









\usepackage{amsmath}
\usepackage{amsthm}
\usepackage{lineno,hyperref}
\usepackage{cases}
\usepackage{bm}
\usepackage[linesnumbered,ruled,vlined]{algorithm2e}
\usepackage{hyperref}
\usepackage[noend]{algpseudocode}
\modulolinenumbers[1]
\newtheorem{definition}{Definition}[section]
\newtheorem{theorem}{Theorem}
\newtheorem{proposition}{Proposition}

\newtheorem{example}{Example}
\newtheorem{corollary}{Corollary}

\newtheoremstyle{prestyle}
{0} 
{\topsep} 
{\itshape} 
{} 
{\bfseries} 
{.} 
{.5em} 
{} 
\theoremstyle{prestyle}

\theoremstyle{definition}	

\begin{document}

\begin{frontmatter}



\dochead{}

\title{Two Graphs: Resolving the Periodic Reversibility of One-dimensional Finite Cellular Automata}


\author[1]{Chen Wang}
\ead{2120220677@mail.nankai.edu.cn}

\author[2]{Junchi Ma}
\ead{majunchi@mail.nankai.edu.cn}

\author[3]{Chao Wang\corref{cor1}}
\ead{wangchao@nankai.edu.cn}

\author[4]{Defu Lin}
\ead{2014580@mail.nankai.edu.cn}

\author[5]{Weilin Chen}
\ead{2120220681@mail.nankai.edu.cn}

\address[1,2,3,4,5]{Address: College of Software, Nankai University, Tianjin 300350, China}

\cortext[cor1]{Corresponding author.}

\begin{abstract}
	
Finite cellular automata (FCA) are widely used in simulating nonlinear complex systems, and their reversibility is closely related to information loss during the evolution. However, only a relatively small portion of their reversibility problems has been solved. In this paper, we perform calculations on two graphs and discover that the reversibility of any one-dimensional FCA exhibits periodicity as the number of cells increases. We also successfully provide a method to compute the reversibility sequence that encompasses the reversibility of one-dimensional FCA with any number of cells. Additionally, the calculations in this paper are applicable to FCA with various types of boundaries. This means that we will have an efficient method to determine the reversibility of almost all one-dimensional FCA, with a complexity independent of cell number. 
	
\end{abstract}

\begin{keyword}
	
	
	finite cellular automata
	\sep graph
	\sep circuit
	\sep reversibility sequence
	\sep period
	
\end{keyword}

\end{frontmatter}



\section{Introduction}
\textbf{Cellular Automaton} (CA) is a discrete model, consisting of a rule and a grid of one or more dimensions. Each grid position has a state, and these states are updated in parallel at discrete time steps based on a fixed local rule. The concept of CA was formally introduced by John von Neumann in 1951 to simulate the self-replication of cells in biological development \cite{1951_Neumann}. CA has a wide range of applications in simulation, encryption, pseudorandom number generation, and more \cite{jun2009image,kang2008pseudorandom,rosin2014cellular}.

\textbf{Reversibility} in CA is one of the most important issues in the field. A reversible CA means that given its state at a certain time step, its state in the previous time step can be accurately determined. In other words, its evolution is reversible. Reversible CA can be used to simulate certain physical processes, such as time-reversal symmetry in quantum mechanics \cite{brun2020quantum}. In information processing, the ability to trace and recover previous states is useful, and a reversible CA offers this possibility \cite{toffoli1990invertible}. Reversibility can also be used in the design of encryption algorithms where the decryption process requires the recovery of the original state \cite{mohamed2014parallel}. The earliest study of reversibility was the concept of the ``Garden-of-Eden" configuration, first proposed by John Myhill in the 1960s and further studied by Moore \cite{1962_Moore,1963_Myhill}. A ``Garden-of-Eden" configuration is a special configuration that has no predecessor, implying a CA exhibiting the ``Garden-of-Eden" is not surjective. Amoroso and Patt in the 1970s discovered decision algorithms for injective and surjective CA and organized the relationship between injectivity and surjectivity \cite{1975_Amoroso,1972_Amoroso}, marking a milestone breakthrough. Bruckner proved that injective and surjective CA must be ``balanced" \cite{1979_Bruckner}. Sutner, based on the de Bruijn graph, provided another decision algorithm for injectivity and surjectivity and proved its complexity is quadratic time \cite{1991_Sutner}. In 1994, Kari proved that the reversibility of two-dimensional and higher-dimensional CA is undecidable, making it difficult to discuss the reversibility of high-dimensional CA \cite{1990_Kari,1994_Kari}. 
In recent years, Wolnik et al. have solved the reversibility problem of number-conserving CA using a very elegant method \cite{DZEDZEJ2020104534, WOLNIK2020180, WOLNIK2022133075}. However, the reversibility of one-dimensional CA has not been fully resolved.

Finite cellular automata (FCA) introduce boundaries to the earliest models of CA. After Kari, many researchers shifted the study of reversibility to linear FCA, using tools like matrices and polynomials. Del Rey made extensive attempts in this area: he analyzed the periodic relationship between the reversibility of a series of specific linear CA and the increase in the number of cells. This approach is feasible when there are few types of linear FCA \cite{del2009reversibility,MARTINDELREY20118360}. However, as the number of types increases and the analysis becomes complicated due to the enlargement of neighborhoods and the number of states, it becomes difficult to carry out. Building on this, Wang extended the computation to the general case of linear FCA, finding the periods of reversibility for any linear FCA through DFA and shift registers \cite{2022_ChaoWang,2015_ChaoWang}. This is already a significant breakthrough.

However, the complexity and functionality of linear FCA is much lower compared to nonlinear FCA. There are only $2^3=8$ elementary linear FCA, while there are a total of $2^{2^3}=256$ general ones. Therefore, only a very small part of the problem has been solved. Of course, there are several algorithms for determining the reversibility of general FCA, but these algorithms have their limitations: not only are they overly complex and unable to determine the period, but they are also only applicable to 3-neighborhood-size FCA and are suitable for only one type of boundary, either null boundary or periodic boundary \cite{2015_Bhattacharjee,2010_Maiti}.

In this paper, we perform calculations on two new kinds of graphs to discover that the reversibility of any nonlinear FCA shows periodicity as cell number $n$ grows. We successfully provide a method to calculate a sequence (reversibility sequence), which contains the reversibility of FCA with all numbers of cells. This means we will have an efficient way to determine the reversibility of any FCA completely. The complexity of this algorithm is independent of $n$, a significant leap compared to Maiti's algorithm \cite{2010_Maiti}. This algorithm is both applicable to null boundaries and periodic boundaries. Our paper demonstrates that, \textbf{even for complex one-dimensional nonlinear FCA, the computation of their reversibility is efficient}.

This paper consists of four sections. The second section introduces the FCA and related concepts. The core content of the algorithm is in the third section. The fourth section summarizes the work of the entire paper.

\section{Preliminaries \label{s2}}
In this section, we give the formal definitions used in the article. We first define the finite cellular automata (FCA) and relevant problems. Then we define different boundaries of one-dimensional FCA. 

\subsection{Finite cellular automata \label{s21}}
Finite cellular automaton (FCA) is the most basic and important concept in this paper. It is usually defined by a quintuple: FCA = $\{Z^d,S,N_m,f,b\}$
\begin{itemize}
	\item $Z^d$ is the dimension of the cellular space. Each point $c \in Z^d$ is called a cell. In this paper, the dimension $d=1$. The number of cells in $Z^1$ is denoted as $n$ in this paper.
	\item $S=\{0,1,\ldots, s-1\}$ is a finite set of states representing a cell's state in CA. At any time, the state of each cell is an element in this set. We use $s$ to count the number of elements in this set.
	\item $N_m=(\vec n_1, \vec n_2, \ldots , \vec n_m)$ represents neighbor vectors, where $\vec n_i \in \mathbb{Z}^d$, and $\vec n_i \neq \vec n_j$ when $i \neq j$ $(i, j = 1, 2,\ldots, m)$. Thus, the neighbors of the cell $\vec n \in \mathbb{Z}^d$ are the $m$ cells $\vec n + \vec n_i, i = 1, 2,\cdots, m$. The neighborhood size of this CA is denoted as $m$ in this paper. We use $l+1+r$ to express a function containing $l$ neighbor on the left and $r$ neighbor on the right. For example, the rule of an elementary FCA can be expressed as $1+1+1$.
	\item $f$: $S^m \rightarrow S$ is the local rule (or simply the rule). The rule maps the current state of a cell and all its neighbors to this cell's next state. 
	\item $b$ is the type of boundary which includes the null boundary ($b=``n"$) and the periodic boundary ($b=``p"$). The algorithm in this paper is also applicable to the reflective boundary, but it will not be further elaborated upon. 
\end{itemize}

A configuration is a mapping $C$ : $Z^d \rightarrow S$ which assigns each cell a state. Function $\tau$ : $C \rightarrow C$ is used to represent the global transformation of $f$.

The Wolfram number is a shorthand for the rule. For instance, for a $FCA = \{Z^1,\{0,1\},N_3,f,b\}$ the rule $f$ is shown as follows:
\begin{equation}
	\begin{split}
		f(0,0,0)=0, \ \ \ \  f(0,0,1)=0,\ \ \ \  f(0,1,0)=0, \ \ \ \ f(0,1,1)=0,\\
		f(1,0,0)=1, \ \ \ \  f(1,0,1)=1,\ \ \ \  f(1,1,0)=1, \ \ \ \ f(1,1,1)=1.
	\end{split}
\end{equation}
Then, we can use the Wolfram number 11110000 to abbreviate this rule. This FCA can be represented as a quintuple $ \{Z^1,\{0,1\},N_3,11110000,b\}$. The length of the Wolfram number is $s^m$.

\subsection{Boundaries of one-dimensional FCA \label{s22}}
There are several different boundaries of one-dimensional FCA. The most important ones studied at present are the null (fixed) boundary and the periodic boundary. Now, we will introduce these boundaries respectively.

Now we suppose the rule of an FCA is $1+1+1$, Then the leftmost cell has no left neighbor, and the rightmost cell has no right neighbor. In this case, the leftmost and rightmost cells in the next transformation will lose their corresponding value. To avoid this, we need to fill their neighbors outside the boundary with certain values. These two boundaries correspond to different values of their neighbors.

The null boundary is also called the zero boundary, which always fills ``$0$" in the cell outside the configuration. The example of a null boundary is shown in Fig. \ref{Fig2-1}.
\begin{figure}[h]
	\center
	\includegraphics[width=0.8\linewidth]{./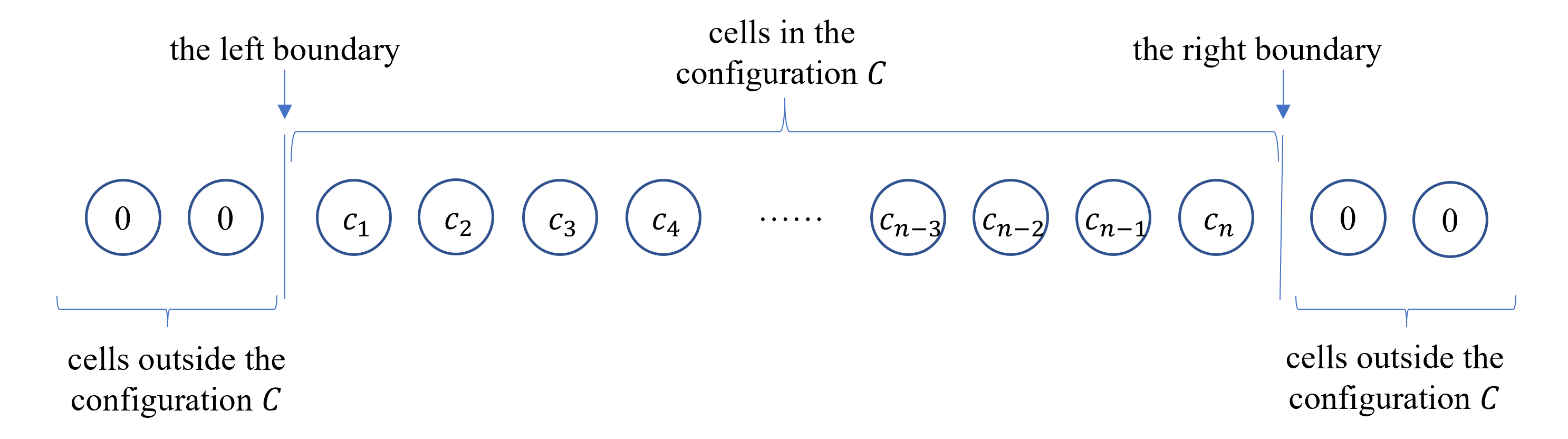}
	\caption{The null boundary of one-dimensional FCA}
	\label{Fig2-1}
\end{figure}

Periodic boundary regards the configuration as a circle, which fills the cells left-outside the configuration according to the rightmost cells. The example of the periodic boundary is also shown in Fig. \ref{Fig2-2}.

\begin{figure}[h]
	\center
	\includegraphics[width=0.8\linewidth]{./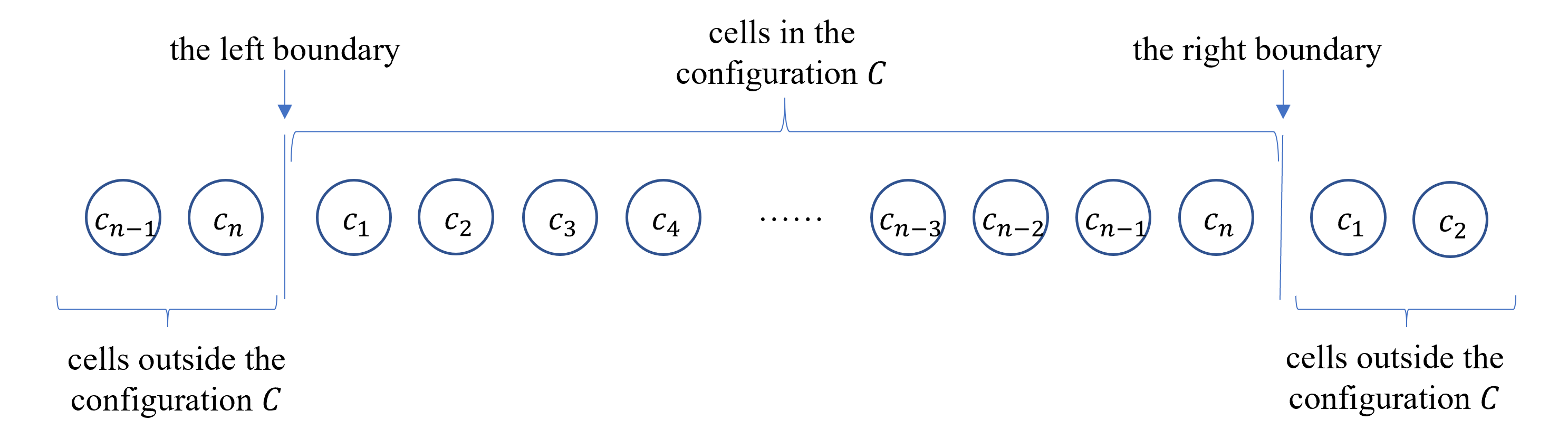}
	\caption{The periodic boundary of one-dimensional FCA}
	\label{Fig2-2}
\end{figure}

\subsection{Reversibility problem \label{s23}}
\begin{definition}
	The \textbf{surjectivity} of a $FCA = \{Z^1,S,N_m,f,b\}$ can be expressed as, for any configuration $c_1 \in C$, there exists a configuration $c_2\in C$, we have $\tau(c_2) = c_1$. This means that every configuration can find its predecessor, indicating that there is no ``Garden-of-Eden" configuration.
\end{definition}

\begin{definition}
	The \textbf{injectivity} of a $FCA = \{Z^1, S, N_m, f, b\}$ can be expressed similarly: for any two configurations $c_1, c_2 \in C$, if $\tau(c_1)=\tau(c_2)$, then there must be $c_1=c_2$. That means all configurations have one and only one predecessor. 
\end{definition}

\begin{proposition}
	For a FCA, surjective and injective are equivalent and are also referred to as bijective or reversible.
\end{proposition}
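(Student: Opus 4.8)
The plan is to reduce the claim to the elementary fact that a self-map of a finite set is injective if and only if it is surjective. First I would observe that, for an FCA with a fixed number of cells $n$ over the state set $S$ with $|S| = s$, the configuration space $C$ is finite: a configuration is a mapping from the $n$ cells into $S$, so $|C| = s^{n} < \infty$. Crucially, once the boundary type $b$ is fixed (null or periodic), the global transformation $\tau$ is a well-defined function from $C$ to itself, since the boundary conditions supply the missing neighbor values for the boundary cells, and hence $\tau(c)$ is again a configuration on the same $n$ cells.

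Next I would invoke the pigeonhole principle in its standard form: for any function $g \colon X \to X$ on a finite set $X$, injectivity, surjectivity, and bijectivity are mutually equivalent. Indeed, if $g$ is injective then $|g(X)| = |X|$, forcing $g(X) = X$ and hence surjectivity; conversely, if $g$ is surjective then $|g(X)| = |X|$ rules out any collision, giving injectivity. Applying this with $X = C$ and $g = \tau$ yields that $\tau$ is injective $\iff \tau$ is surjective $\iff \tau$ is bijective. By the two preceding Definitions, these three conditions are precisely the injectivity, surjectivity, and bijectivity of the FCA, so the proposition follows and the three terms may be used interchangeably under the name \emph{reversible}.

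I do not expect a serious obstacle here, since the argument is purely set-theoretic and rests only on the finiteness of $C$. The one point that deserves care is verifying that $\tau$ genuinely maps $C$ into $C$ for each admissible boundary: for the null boundary one checks that padding with ``$0$'' outside the array keeps the output length equal to $n$, and for the periodic boundary one checks that wrapping the array around preserves the cell count. Once $\tau$ is confirmed to be an endofunction of the finite set $C$, the equivalence is immediate, and I would note that this finiteness is exactly what fails in the infinite (two-sided) setting, where surjectivity and injectivity genuinely diverge.
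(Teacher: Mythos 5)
Your proof is correct and takes essentially the same route as the paper: both arguments rest on the configuration space being a finite set of size $s^{n}$ and on the standard fact that an endofunction of a finite set is injective if and only if it is surjective. Your version merely adds the (reasonable) extra check that $\tau$ is a well-defined self-map of $C$ under each boundary condition, which the paper leaves implicit.
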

\begin{proof}
	For a FCA, since the number of cells is fixed, the two sets before and after the mapping by the global mapping function $\tau$ have an equal number of elements. Therefore, the surjectivity is equivalent to injectivity, which is a bijectivity, also known as reversibility.
\end{proof}
\begin{definition}
	If all $n$-cell-configurations (the number is $s^n$) of a $FCA = \{Z^1,S,N_m,f,b\}$ has one and only one predecessor, then this FCA is \textbf{$n$-cell-reversible}.
\end{definition}

\begin{definition}
	If for any $n$, the FCA is always $n$-cell-reversible, then this FCA is strictly reversible. Conversely, if it is never $n$-cell-reversible, then it is strictly irreversible.
\end{definition}

\begin{definition}
	The \textbf{reversibility sequence} is an infinite sequence of 0s and 1s, where the $n$th number from the left is 0 indicates that the FCA is irreversible when it has $n$ cells, and 1 indicates its reversibility. Due to the presence of periodicity, we underline a complete period in the sequence. For example, if an FCA's reversibility sequence is 010\underline{001}, then this FCA is reversible only when the number of cells is in $\{2\} \cup \{3k\ |\ k>1$ and $k \in \mathbb{Z_+}\}$. Its reversibility period is 3.
\end{definition}

Our goal is to quickly determine whether any given FCA is strictly reversible or strictly irreversible. If it is neither, we should obtain its reversibility sequence directly by calculating the period. Specifically, for various widely used boundary conditions, the algorithm should be fully compatible (without significant differences).

\subsection{Amoroso's algorithm \label{s24}}
Amoroso's algorithm is used to determine the surjectivity of infinite CA. It can be described in brief step by step.
\begin{description}
	\item[Step 1] For a CA with neighborhood size $m$ and state set $S=\{0,1,\ldots, p-1\}$, get its local rule $f$.
	\item[Step 2] If $f(m)=0$, add $m$ into the root node.
	\item[Step 3] For each node $M$ in layer $i$ $(i \geq 0)$, construct its children $M_0, M_1,\cdots, M_{p-1}$. For each tuple $a_1a_2 \cdots a_{m}$ in $M$ and all $0 \leq j<p$, if $f(a_2a_3 \cdots a_mj)=t$, then $a_2a_3 \cdots a_mj$ is added to $M_t$. If a node is the same as a node that has appeared before, its children will not be constructed. If there are not any m-tuples in a node, then the CA has a Garden-of-Eden which decides the CA is not surjective.
	\item[Step 4] If the three steps above are completed and the CA is not decided as non-surjective, then the CA is surjective.
\end{description}


\section{Main result \label{s3}}

The main process of the algorithm is divided into three steps: The first step is to construct the reversibility graph. The second step is to find all circuits in the reversibility graph and construct the circuit graph. The third step is to list the expressions of all irreversible vertices and ultimately provide the reversibility period and sequence. 

\subsection{Reversibility graph \label{ss31}}

\begin{definition}
	For FCA = $\{Z^1,S,N_m,f,``n"\}$ $(m=l+1+r)$, its reversibility graph $G_R=(V,E)$ is defined as follows:
	\begin{itemize}
		\item $V=\{v : v \subseteq S^{m-1},\text{ and there is a path from $v_{root}$ to $v$}\}$.
		\item $v_{root} = \{a_1,\cdots,a_{m-1}:a_1,\cdots,a_{m-1} \in S^{m-1} \text{ and } \forall i \in \{1,\cdots,l\},a_i=0\}$.
		\item $v_{end} = \{a_1,\cdots,a_{m-1}:a_1,\cdots,a_{m-1} \in S^{m-1} \text{ and } \forall i \in \{l+1,\cdots,m-1\},a_i=0\}$.
		\item For $u,v \in V$, there is an edge $(u,v)$ labeled $e$ from $u$ to $v$ iff $\forall a=\{a_1,\cdots,a_{m-1}\} \in v$ and $i\in \{1,\cdots,m-2\}$, $ \exists b=\{b_1,\cdots,b_{m-1}\} \in u$, there are $a_i=b_{i+1}$ and $f(b_1,a_1,\cdots,a_{m-1})=e$
	\end{itemize}
\end{definition}

\begin{definition}
	For FCA = $\{Z^1,S,N_m,f,``p"\}$ $(m=l+1+r)$, its reversibility graph $G_R=(V,E)$ is defined as follows:
	\begin{itemize}
		\item $V=\{v : v \subseteq S^{2m-2},\text{and there is a path from $v_{root}$ to $v$}\}$.
		\item $v_{root} = v_{end} = \{a_1,\cdots,a_{2m-2}:a_1,\cdots,a_{2m-2} \in S^{2m-2} \text{ and } \forall i \in \{1,\cdots,m-1\},a_i=a_{i+m-1}\}$.
		\item For $u,v \in V$, there is an edge $(u,v)$ labeled $e$ from $u$ to $v$ iff $\forall a=\{a_1,\cdots,a_{m-1}\} \in v$, $i\in \{1,\cdots,m-2\}$, and $j\in \{1,\cdots,m-1\}$, $ \exists b=\{b_1,\cdots,b_{m-1}\} \in u$, there are $a_j=b_j$, $a_{i+m-1}=b_{i+m}$ and $f(b_m,a_m,\cdots,a_{2m-2})=e$
	\end{itemize}
\end{definition}

\begin{definition}
	The negative vertex set $V_N = \{v \in V:v \cap end \ne \emptyset\}$.
\end{definition}

\begin{theorem}
	FCA = $\{Z^1,S,N_m,f, b\}$ is strictly reversible iff $V_N = \emptyset$ in $G_R$.
\end{theorem}
\begin{proof}
	$G_R$ is very similar to the tree constructed by Amoroso's algorithm introduced in Subsection \ref{s24}. The difference lies in the removal of duplicate vertices from the tree and the addition of $v_{root}$ and $v_{end}$ to accommodate the boundary conditions of FCA, thus the proofs are also similar.
	
	If there are no negative vertices in $G_R$, for any configuration $c$, there exists a corresponding path starting from $v_{root}$, and among the vertices the path traverses, there must be a valid configuration $c'$ that satisfies $\tau(c')=c$. Since any configurations can find predecessors through the vertices on the path, this FCA is strictly reversible. 
	Conversely, if there is at least one negative vertex in $G_R$, the corresponding configurations have no predecessors and are Garden-of-Eden configurations, meaning that the FCA is not strictly reversible. 
\end{proof}

\begin{proposition}
	$G_R$ can be constructed within $O(s^mV)$.
\end{proposition}
\begin{proof}
	For FCA, the complexity of the rule $f$ is $s^m$. Therefore, the complexity of $G_R$ is linearly related to both $V$ and the rule complexity $s^m$, resulting in an overall complexity of $O(s^mV)$. Fig. \ref{RG} and Algorithm \ref{a} shows the structure and a detailed pseudocode of $G_R$.
\end{proof}
\begin{figure}[h]
	\center
	\includegraphics[width=0.48\linewidth]{./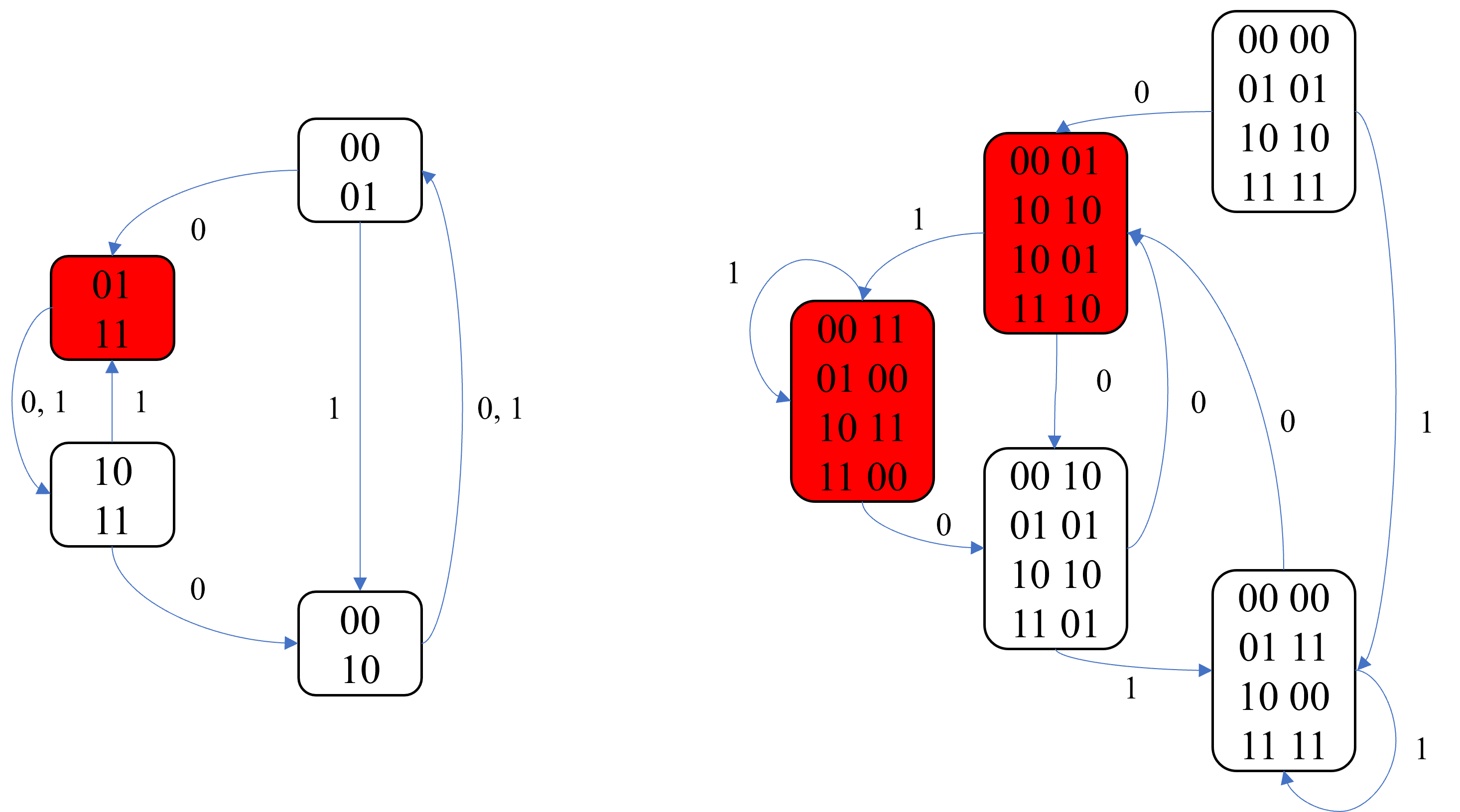}
	\caption{The left side is the reversibility graph for FCA=$\{Z^1,\{0,1\},N_3,10100101,``n"\}$, while the right side is for FCA = $\{Z^1,\{0,1\},N_3,10011001,``p"\}$. Negative vertices are marked in red.}
	\label{RG}
\end{figure}
\begin{algorithm}[H]
	\small
	\SetAlgoLined
	\label{a}
	\KwData{one-dimensional finite CA $\{Z^1, S, N_m, f, b\} (m = l + 1 + r)$}
	\KwResult{reversibility graph $G_R$}
	let $G_R$ be an empty directed graph, $Q$ be an empty queue\;
	\eIf{b==``n"}{
		$v_{root} = \{a_1a_2\cdots a_{m-1}|a_1a_2\cdots a_{m-1}\in S^{m-1}, a_j=0, 1\leq j\leq l\}$\;
		$v_{end} = \{a_1a_2\cdots a_{m-1}|a_1a_2\cdots a_{m-1}\in S^{m-1}, a_j=0, m - r\leq j\leq m-1\}$\;
	}(b==``p"){
		$v_{root} = v_{end} = \{a_1a_2\cdots a_{m-1}b_1b_2\cdots b_{m-1}|a_1a_2\cdots a_{m-1}b_1b_2\cdots b_{m-1} \in S^{2m-2}, a_i = b_i, 1\leq i\leq m-1\}$\;
	}
	$d(v_{root}) = 0$\;
	add $v_{root}$ to $G_R$ and $Q$\; 
	\While{$Q$ is not empty}{
		pop the first vertex in $Q$ and denote it with $v_{cur}$\;
		\If{$v_{cur} \cap v_{end} == \emptyset$}{
			mark $v_{cur}$ red\;
		}
		\For{each $e \in S$}{	
			\eIf{b==``n"}{
				$v_e = \{a_1a_2\cdots a_{m-1}|a_1a_2\cdots a_{m-1}\in S^{m-1}, \exists a_0 \in S, a_0a_1\cdots a_{m-2} \in v_{cur}$ and $f(a_0a_1\cdots a_{m-1}) = e\}$\;
			}(b==``p"){
				$v_e = \{a_1a_2\cdots a_{m-1}b_1b_2\cdots b_{m-1}|a_1a_2\cdots a_{m-1}b_1b_2\cdots b_{m-1} \in S^{2m-2}, \exists b_0 \in S, a_1a_2\cdots a_{m-1}b_0b_1\cdots b_{m-2} \in v_{cur}$ and $f(b_0b_1\cdots b_{m-1}) = e \}$\;
			}
			\eIf{$\exists v_{same} == v_e$ in $G_R$}{
				add an edge $(v_{cur},v_{same})$ labeled $e$ into $G_R$\;
			}{
				$d(v_e) = d(v_{cur})+1$ \;
				add an edge $(v_{cur},v_{e})$ labeled $e$ into $G_R$\;
				add $v_e$ to $G_R$ and $Q$\;
			}		
		}
	}
	
	return $G_R$\;
	\caption{the reversibility graph $G_R$}
\end{algorithm}

\subsection{Circuit graph and reversibility period}
In the previous section, we introduce the construction of the reversibility graph $G_R$ and use it to determine if an FCA is strictly reversible. For FCA with a specific number of cells, $G_R$ remains available, but the concept of ``depth" needs to be introduced.
\begin{definition}
	The depth of $v \in V$ in $G_R$ is $d(v)=min\{d(u)+1:(u,v)\in E\}$ ($d(v_{root})=0$). 
\end{definition}
\begin{figure}[h]
	\center
	\includegraphics[width=0.5\linewidth]{./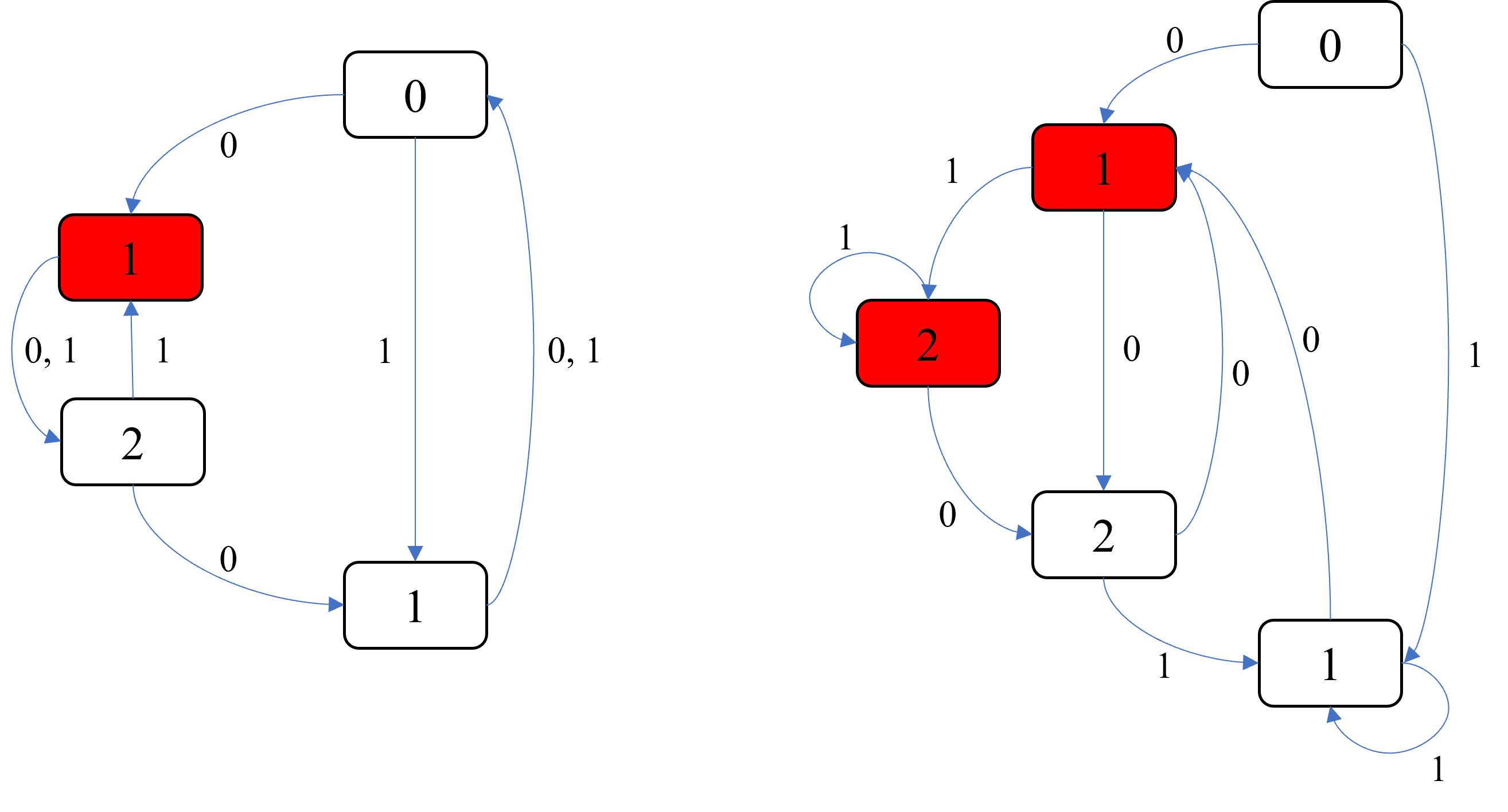}
	\caption{Replace the content of each vertex $v$ of $G_R$ in Fig. \ref{RG} with its depth $d(v)$.}
	\label{depth}
\end{figure}

\begin{theorem}
	If $v \in V_N$, then the FCA is irreversible when it has $d(v)$ cells.
\end{theorem}
\begin{proof}
	There is always a path from $v_{root}$ to $v$. The configuration corresponding to this path cannot find predecessors, indicating the existence of a ``Garden of Eden" configuration. For example, in Fig. \ref{RG} left, the configuration with one cell, 0, is a ``Garden-of-Eden" configuration.
\end{proof}

\begin{theorem}
	If $v \in V_N$, and $v$ is included in $i$ elementary circuits of length $r_1,r_2,\cdots,r_i$ . then the FCA is irreversible when it has $d(v)+x_1r_1+x_2r_2+\cdots+x_ir_i$ cells $(x_1,x_2,\cdots,x_i \in \mathbb{N}_0)$.
\end{theorem}
\begin{proof}
	Starting from $v$, there are $i$ different paths that return to $v$, and continuing the search, there are $i$ different paths again, essentially forming a linear combination of $r_1,r_2,\cdots,r_i$.
\end{proof}

The aforementioned theorems are still not sufficient. We can find that in Fig. \ref{RG} left, 000(01)*10 is also a ``Garden-of-Eden" configuration, yet we have not included it in our calculations. To address this issue, we define a new type of graph called the "circuit graph."

\begin{definition}
	For FCA = $\{Z^1,S,N_m,f,b\}$, its circuit graph $G_C=(V_C,E_C)$ is defined as follows:
	\begin{itemize}
		\item $G_C$ is an undirected graph.
		\item $V_C = V_{E} \cup V_N $, where $V_E$ is the set of elementary circuit in $G_R$.
		\item For $u,v \in V_C$, there is $(u,v) \in E_C$ iff $u$ and $v$ include at least one same vertex.
		\item For $u \in V_N$, $v \in V_C$, there is $(u,v) \in E_C$ iff $u$ is included in $v$.
		\item The $d(u)$ of $u \in V_N$ is its depth in $G_R$ and $l(v)$ of $v \in V_C$ is its length in $G_R$.
	\end{itemize}
\end{definition}

\begin{theorem}
	If  $v \in V_N$, $v_1,\cdots,v_i \in V_E$ and they are connected to $v$ in $G_C$, then the FCA is irreversible when it has $k+x_1l(v_1)+x_2l(v_2)+\cdots+x_il(v_i)$ cells where $x_1,x_2,\cdots,x_i \in \mathbb{N}_0$ and the subgraph consisting of $v_j$ where $x_j \ne 0$ and their related edges is a connected graph.
\end{theorem}
\begin{example}
	\label{example}
	The proof of this theorem is similar to that of the previous theorems, so we use an example in place of a formal proof here. Fig. \ref{CG} provides $G_C$ corresponding to $G_R$ in Fig. \ref{RG} right. For convenience of representation, we assign an order number to each vertex in the top left corner.
	\begin{figure}[h]
		\center
		\includegraphics[width=0.4\linewidth]{./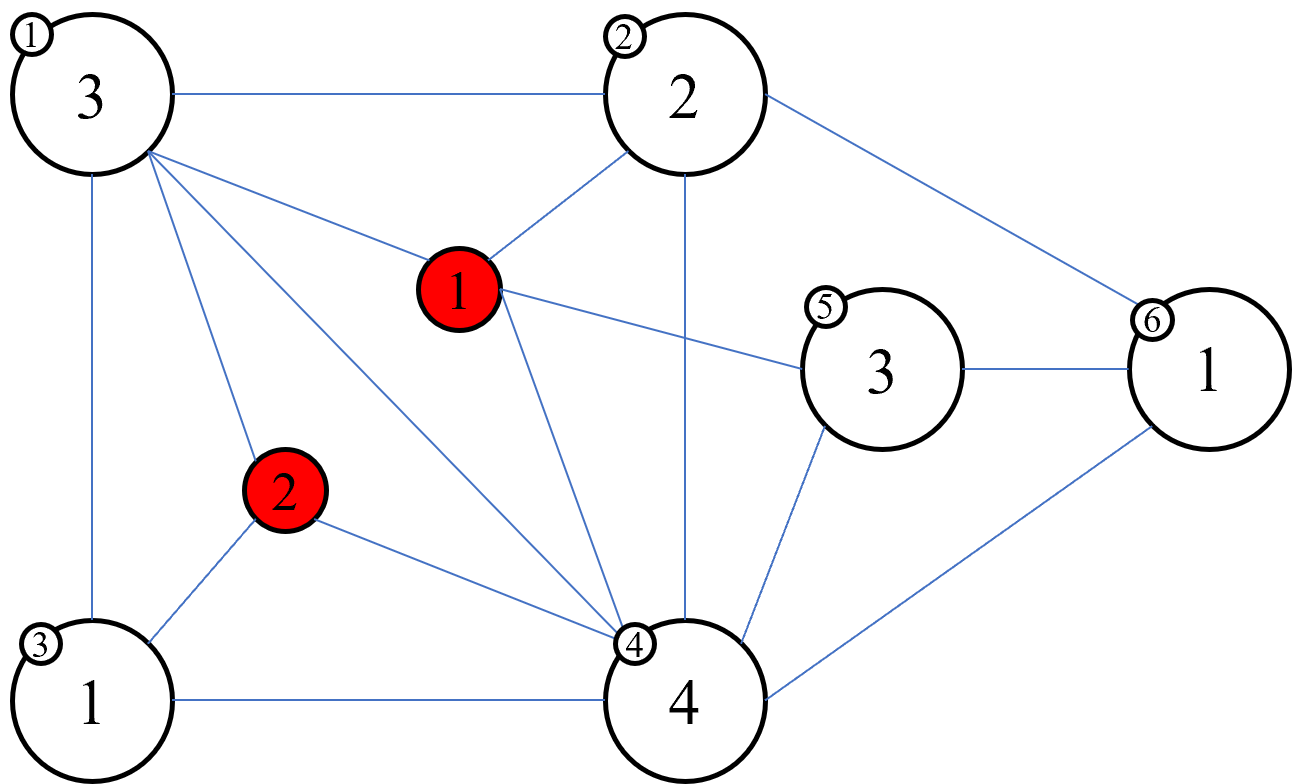}
		\caption{The circuit graph $G_C$ of $FCA = \{Z^1,\{0,1\},N_3,10011001,``p"\}$.}
		\label{CG}
	\end{figure}
	
	We can quickly list the irreversible number for the FCA based on the above theorems. 
	\begin{equation}
		\left\{\begin{matrix}
			1+3x_1+2x_2+x_3+4x_4+3x_5+x_6 \ \ \\
			2+3y_1+2y_2+y_3+4y_4+3y_5+y_6 \ \  
		\end{matrix}\right.
	\end{equation}
	
	From $G_C$, we know that $y_1=y_3=y_4=0$ is a sufficient condition for $y_5=y_6=0$. Many similar conditions are all easy to discern, so we will not list them all here. The union of all these irreversible number forms the reversibility sequence of the FCA. Ultimately, we can determine that the reversibility sequence for this FCA is: \underline{0}.
\end{example}

\begin{definition}
	For a connected component $G_{sub}=(V_{sub},E_{sub})$ of $G_C$, the sub-period is $\gcd(V_{sub} \cap V_C)$, where the function $\gcd$ is to find the maximum common divisor of the lengths of all circuits in set. 
\end{definition}

\begin{theorem}
	The reversibility period of FCA is $\mathrm{lcm}(P)$, where $P$ is the set of sub-periods and function $\mathrm{lcm}$ is to get the minimum common multiple of a set, then . 
\end{theorem}
\begin{proof}	
	The proof of this theorem stems from the Frobenius Coin Problem \cite{bardomero2020frobenius,shallit2008frobenius}. This problem can be briefly described as determining the largest amount of money that cannot be made with given denominations of coins. The problem has been widely explored in number theory and combinatorial mathematics. In this paper, we utilize it to calculate the reversibility period.
\end{proof}

\begin{corollary}
	\label{c1}
	If $v \in V_N$, $\exists u \in V_E$ in $G_C$ and $l(u)=1$, then the FCA is irreversible when the number $n$ of cells satisfies
	\begin{equation}
		n \ge d(v)+\sum_{w \in V_E}l(w).
	\end{equation}
\end{corollary}
\begin{corollary}
	\label{c2}
	If $v \in V_N$, $\exists u_1,u_2 \in V_E$ in $G_C$ and $\gcd(l(u_1),l(u_2))=1$, then the number $n$ of cells satisfies
	\begin{equation}
		n>d(v)+l_1l_2+\sum_{w \in V_E \setminus \{u_1,u_2\}}l(w).
	\end{equation}
\end{corollary}
\begin{proof}
	Corollary \ref{c1} is obvious, while the proof of Corollary \ref{c2} comes from the precise conclusion of the Frobenius Coin Problem.
\end{proof}
These two corollaries can help us make reversibility determinations more quickly and directly, sometimes even eliminating the need to construct $G_C$. In Example \ref{example}, we can determine that the FCA is irreversible when the number of cells is greater than or equal to 2, simply by finding there is a negative vertex in $G_R$ with a self-circuit.

Through the aforementioned analysis, we have resolved the most significant issues. Only two minor problems remain to be addressed: one is finding all the circuits in $G_R$, and the other is the construction of $G_C$.

\begin{proposition}
	\cite{johnson1975finding} We can find all elementary circuits of a directed graph in time bounded by $O((V + E)(l + 1))$, where there are $V$ vertices, $E$ edges and $l$ circuits in the graph. 
\end{proposition}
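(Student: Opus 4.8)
The plan is to recognize this as Johnson's enumeration theorem and to prove it in two largely independent parts: that a backtracking depth-first search, run once from each vertex in a fixed order, reports every elementary circuit exactly once, and that an auxiliary blocking mechanism confines the work done between two successive reports to $O(V+E)$. First I would impose a linear order $1,2,\ldots,V$ on the vertices and, for each candidate root $w_0$, restrict attention to the strong component containing $w_0$ in the subgraph induced on $\{w_0,w_0+1,\ldots,V\}$. Because every elementary circuit has a unique least vertex, it suffices to report, for each $w_0$, exactly those circuits whose minimum vertex equals $w_0$, which guarantees each circuit is emitted once and only once. The strong-component restriction is what makes the search productive: inside a strongly connected subgraph any partial path out of $w_0$ is reachable back to $w_0$, and, conversely, whenever the restricted component is nontrivial its least vertex provably lies on at least one elementary circuit (a fact I will reuse in the timing argument).

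Next I would spell out the recursive routine $\mathrm{CIRCUIT}(v)$: it keeps the current elementary path on a stack initialized with $w_0$, sets $\mathrm{blocked}[v]=\mathrm{true}$ on entry, and scans the out-edges of $v$ inside the restricted component. An edge $v\to w_0$ triggers a report of the stacked path; an edge to an unblocked vertex $w$ triggers a recursive descent. The correctness invariant to verify, by induction on path length, is that at each entry to $\mathrm{CIRCUIT}(v)$ the stack holds an elementary $w_0$-to-$v$ path and that each elementary circuit with least vertex $w_0$ corresponds to exactly one successful chain of descents; this is routine once the blocking invariant below is established.

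The heart of the argument, and the step I expect to be the main obstacle, is the amortized time bound, which rests on the $\mathrm{blocked}[\cdot]$ flags, the auxiliary lists $B[\cdot]$, and the $\mathrm{UNBLOCK}$ procedure. I would first prove the \emph{safety} invariant: $\mathrm{blocked}[w]=\mathrm{true}$ holds exactly when every path from $w$ back to $w_0$ meets a vertex currently on the stack, so that a blocked $w$ cannot extend the present path into a new elementary circuit and nothing is missed by skipping it. The harder, \emph{efficiency} half is the lemma that between two successive reports each vertex is unblocked at most once. This is delicate precisely because $\mathrm{UNBLOCK}$, which a call $\mathrm{CIRCUIT}(v)$ issues for $v$ whenever its subtree has produced a report, is interleaved with the continued forward scan rather than confined to one tidy unwinding burst; the lists $B[\cdot]$ are exactly the bookkeeping that ensures a vertex is re-awakened only when a freshly reported circuit genuinely opens a new $w$-to-$w_0$ path, and the guard inside $\mathrm{UNBLOCK}$ prevents a vertex from being cleared twice along a single propagation. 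Establishing this lemma is the technical core.

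Granting the lemma, the bound follows by counting. Unblocked at most once per inter-report interval means each vertex is entered by $\mathrm{CIRCUIT}$, and hence re-blocked, only $O(1)$ times per interval, so the forward work—edge scans plus $B$-list insertions—is $O(V+E)$ per interval; each $\mathrm{UNBLOCK}$ deletion is charged to the $B$-list insertion that an earlier edge scan already paid for, so unblocking stays within the same budget. With the $l$ reported circuits delimiting $l+1$ intervals, a single search from $w_0$ costs $O((V+E)(l_{w_0}+1))$, where $l_{w_0}$ counts the circuits with least vertex $w_0$; summing over roots and using $\sum_{w_0} l_{w_0}=l$ gives $O((V+E)(l+1))$. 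Finally I would absorb the strong-component recomputations, one per root advance: whenever a recomputation exposes a nontrivial component its least vertex lies on a fresh elementary circuit, so there are at most $l+1$ recomputations, each $O(V+E)$ by Tarjan's linear-time algorithm, which folds into $O((V+E)(l+1))$ and completes the proof.
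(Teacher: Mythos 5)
The paper offers no proof of this proposition at all --- it is imported verbatim from Johnson (1975) as a citation --- and your sketch is a faithful outline of Johnson's own argument: least-vertex canonical roots restricted to strong components of the induced subgraphs, the $\mathrm{blocked}$/$B$-list mechanism, the amortized $O(V+E)$ bound between successive outputs, and the count of at most $l+1$ strong-component recomputations. The one caveat is that you flag the unblocking lemma as ``the technical core'' without actually establishing it, so what you have is an accurate plan for reproducing Johnson's proof rather than a self-contained argument; that is entirely adequate here, since the paper itself treats the result purely as a cited fact.
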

We can find all the circuits of $G_R$ in $O(sVl)$, where the FCA has $s$ states and $G_R$ has $V$ vertices and $l$ circuits. 

\begin{proposition}
	We can construct $G_C$ in time bounded by $O(Vl^2)$, where there are $V$ vertices and $l$ circuits in $G_R$. 
\end{proposition}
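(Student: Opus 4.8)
The plan is to build the CG in two phases — first instantiate its vertices, then its edges — and to show that the edge phase dominates and costs $O(Vl^2)$.

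First I would create the vertex set of the CG. By the definition of the CG there is one vertex per circuit of the RG and one vertex per negative node. Since the RG has $l$ circuits and at most $V$ negative nodes, the CG has $O(V+l)$ vertices, which can be listed in $O(V+l)$ time; assigning each circuit vertex its value (its length) and each negative-node vertex its inherited value costs $O(1)$ per vertex. This phase is negligible against the target bound. The key observation for the edge phase is that every circuit is an \emph{elementary} circuit, so it visits at most $V$ distinct nodes of the RG and is already available as such a node list from the circuit-enumeration step. To decide whether two circuit vertices $A$ and $B$ are adjacent I would use a boolean array indexed by the nodes of the RG: mark every node of $A$ (at most $V$ marks), scan the nodes of $B$ and report an edge as soon as a marked node is found, then unmark $A$. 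Each intersection test costs $O(V)$, and since there are $\binom{l}{2}=O(l^2)$ pairs of circuits, all circuit–circuit adjacency edges are produced in $O(Vl^2)$ time.

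For the circuit–negative-node edges I would, for each of the $l$ circuits, walk its node list once and, using an $O(1)$ ``is-negative'' flag stored on each RG node, emit an edge to the corresponding negative-node vertex whenever such a node is encountered. This is $O(V)$ per circuit and $O(Vl)$ in total, which is dominated by the $O(Vl^2)$ term above. Summing the three phases yields the claimed $O(Vl^2)$ bound.

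The main obstacle is keeping the pairwise adjacency test linear in $V$ rather than quadratic: a naive element-by-element comparison of the two node lists would cost $O(V^2)$ per pair and inflate the bound to $O(V^2l^2)$. The marking-array trick — which relies on the RG nodes carrying integer identifiers so they can index an array of size $V$ — is what pins the per-pair cost to $O(V)$; one should also note that reusing and clearing a single shared array across all pairs (rather than re-allocating it) is what keeps the total bookkeeping within the stated budget.
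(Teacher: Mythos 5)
Your proposal is correct and follows essentially the same route as the paper's proof: there are $O(l^2)$ pairs of circuits, and each adjacency test costs $O(V)$ because an elementary circuit visits at most $V$ nodes, giving $O(Vl^2)$ overall. You merely add implementation detail the paper leaves implicit (the marking-array trick for the $O(V)$ intersection test and the explicit $O(Vl)$ accounting for circuit--negative-node edges), which strengthens rather than changes the argument.
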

\begin{proof}		
	In $G_C$, there are $l$ points, so there are at most $l^2$ edges. When determining whether two points are adjacent, we need to check whether at most $V$ vertices appear simultaneously in these two circuits. Therefore, the complexity is $O(Vl^2)$. Algorithm \ref{b} shows the  detailed pseudocode of $G_C$.
\end{proof}

\begin{algorithm}[h]
	\small
	\SetAlgoLined
	\label{b}
	\KwData{a reversibility graph $G_R=(V,E)$}
	\KwResult{the circuit graph $G_C$}
	let $G_C=(V_C,E_C)$ be an empty directed graph\;
	Get the negative vertex set $V_N$ in $G_R$\;
	Calculate the elementary circuit set$ V_E$ in $G_R$ by algorithm in \cite{johnson1975finding}\;
	$V_C=V_N \cup V_E$\;
	\For{each $v,v' \in V_E$}{
		\If{$u$ and $v$ include at least one same vertex}{
			add $(v,v')$ into $E_C$\;
		}
	}
	\For{each $v \in V_N,v' \in V_E$}{
		\If{$v$ is included in $v'$}{
			add $(v,v')$ into $E_C$\;
		}
	}
	return $G_C=(V_C,E_C)$\;
	\caption{the circuit graph $G_C$}
\end{algorithm}

Thus, the reversibility problem of one-dimensional FCA has been efficiently resolved. Actually, determining whether an FCA with $n$ cells is reversible is not straightforward. However, we approach this problem from a novel perspective: we list all the irreversible scenarios, and what remains are naturally reversible, as a substitute for individually determining the reversibility for every cell count. This approach significantly simplifies the complexity of analyzing the entire problem. Next, let's fully recapitulate our algorithm.

\begin{description}
	\item[Step 1] Construct $G_R$ according to Algorithm \ref{a} with the complexity of $O(s^mV)$.
	\item[Step 2] Find all circuits using Johnson's algorithm with the complexity of $O(sVl)$ \cite{johnson1975finding}.
	\item[Step 3] Construct $G_C$ according to Algorithm \ref{b} with the complexity of $O(Vl^2)$.
	\item[Step 4] List all irreversible number polynomials and solve the reversibility sequence.
\end{description}

We list several reversibility sequences for a series of rules in Table \ref{table1}. Some of the rules in this chapter are linear and can be compared with existing results for linear FCA to verify the correctness of our algorithm. The underlined portions represent a period. $s$ is the number of states; $m$ is the neighborhood size; $b$ is the boundary and $f$ is the rule.

\begin{table}[h]
	\center
	\caption{List of reversibility sequence}
	\label{table1}
	\begin{tabular}{ccc}
		\hline
		$s$ , $m$ , $b$ & $f$ & reversibility sequence\\
		\hline
		&  10101010 & 00000000000$\underline{0}$\\
		&  10101001 & 10000000000$\underline{0}$\\
		2 , 3 , ``n" & 10100110 & 10000000000$\underline{0}$\\
		& 10100101 & 0101010101$\underline{01}$\\
		& 10010110 & 101101101$\underline{101}$\\
		\hline
		& 11100001 & 00000000000$\underline{0}$\\
		& 11010010 & 1010101010$\underline{10}$\\
		& 11000011 & 00000000000$\underline{0}$\\
		2 , 3 , ``p" & 10101001 & 00000000000$\underline{0}$\\
		&  10100110 & 1010101010$\underline{10}$\\
		&  10010110 & 110110110$\underline{110}$\\
		\hline
		&  0010101111010100 & 10110100000$\underline{0}$\\
		2 , 4 , ``n" & 0011001100111100 & 10000000000$\underline{0}$\\
		&  0011101011000011 & 10100000000$\underline{0}$\\
		&  1001011001101001 & 10011001$\underline{1001}$\\
		\hline
		&  022211222000011200010212111 & 10000000000$\underline{0}$\\
		3 , 3 , ``n" & 110002222110011222010022011 & 10000000000$\underline{0}$\\
		&  012012012120120120201201201 & 0101010101$\underline{01}$\\
		&  012120201120201012201012120 & 11011101$\underline{1101}$\\
		\hline 
	\end{tabular}
\end{table}

During the experimental process, we found that a large number of nonlinear FCA have a period of 1, meaning that starting from a certain length, configurations of greater lengths are all non-reversible. However, there is still a small portion of nonlinear FCA whose reversibility exhibits periodicity (e.g. the rule 11010010).

\section{Conclusion \label{s4}}
Our algorithm allows for the computation of the reversibility of configurations of any length for any FCA, with the complexity of $O(s^mV+sVl+Vl^2)$. Moreover, the characteristic of finding periods makes this algorithm particularly suitable for computing configurations of greater lengths. However, there are still some issues that remain. First, is there still room for optimization in this algorithm, to further reduce unnecessary computations? Second, during our experiments, we found that there are very few non-linear CA with a period larger than 1; is there a sufficient and necessary condition to identify these CA?

\section*{Acknowledgments}
This study is financed by Tianjin Science and Technology Bureau, finance code:21JCYBJC00210.

\section*{Declaration of generative AI}
Generative AI is only used for translation and language polishing in this paper.

\bibliographystyle{elsarticle-harv}
\bibliography{fbref}

\end{document}